\newcommand{\R}{\mathbb{R}}
\def\minim{\mathop{\hbox{minimize}}}
\newtheorem{thm}{Theorem}[section]
\newtheorem{prop}[thm]{Proposition}
\theoremstyle{remark}
\newtheorem{rem}{Remark}[thm]
\theoremstyle{definition}
\title{SUPPORT DRIVEN REWEIGHTED $\ell_1$ MINIMIZATION}
\name{Hassan Mansour and {\"O}zg{\"u}r Y{\i}lmaz \thanks{hassanm@cs.ubc.ca; oyilmaz@math.ubc.ca}\vspace{-0.2in}
\thanks{
Both authors were supported in part by the Natural
Sciences and Engineering Research Council of Canada (NSERC)
Collaborative Research and Development Grant DNOISE II
(375142-08). {\"O}. Y{\i}lmaz was also supported in part by an NSERC
Discovery Grant.}}
\address{University of British Columbia, Vancouver - BC, Canada}
\begin{document}
%
\maketitle
\begin{abstract}
  In this paper, we propose a support driven reweighted $\ell_1$
  minimization algorithm (SDRL1) that solves a sequence of weighted
  $\ell_1$ problems and relies on the support estimate accuracy. Our
  SDRL1 algorithm is related to the IRL1 algorithm proposed by
  Cand{\`e}s, Wakin, and Boyd. We demonstrate that it is sufficient to
  find support estimates with \emph{good} accuracy and apply constant
  weights instead of using the inverse coefficient magnitudes to
  achieve gains similar to those of IRL1. We then prove that given a
  support estimate with sufficient accuracy, if the signal decays
  according to a specific rate, the solution to the weighted
  $\ell_1$ minimization problem results in a support estimate with
  higher accuracy than the initial estimate. We also show that under
  certain conditions, it is possible to achieve higher estimate
  accuracy when the intersection of support estimates is
  considered. We demonstrate the performance of SDRL1 through numerical
  simulations and compare it with that of IRL1 and standard
  $\ell_1$ minimization.
\end{abstract}
\begin{keywords}
Compressed sensing, iterative algorithms, weighted $\ell_1$ minimization, partial support recovery
\end{keywords}
\section{Introduction}
\label{sec:intro}

Compressed sensing is a relatively new paradigm for the acquisition of signals that admit sparse or nearly sparse representations using fewer linear measurements than their ambient dimension \cite{Donoho2006_CS, CRT05}. 

Consider an arbitrary signal $x \in \R^N$ and let $y \in \R^n$ be a
set of measurements given by $y = Ax + e,$ where $A$ is a known
$n\times N$ measurement matrix, and $e$ denotes additive noise that
satisfies $\|e\|_2\leq \epsilon$ for some known $\epsilon\ge
0$. Compressed sensing theory states that it is possible to recover
$x$ from $y$ (given $A$) even when $n \ll N$, that is, using very few
measurements. When $x$ is strictly sparse---i.e., when there are only
$k < n$ nonzero entries in $x$---and when $e=0$, one may recover an
estimate $\hat{x}$ of the signal $x$ by solving the constrained $\ell_0$
minimization problem
\begin{equation}\label{eq:L0_min}
\minim_{u\in \R^N}\ \|u\|_0 \ \text{subject to} \ \ Au=y.
\end{equation}
%
However, $\ell_0$ minimization is a combinatorial
problem and quickly becomes intractable as the dimensions
increase. Instead, the convex relaxation
$$\minim_{u \in \R^N}\ \|u\|_1 \ \text{subject to} \ \|Au - y\|_2 \leq \epsilon \qquad \textrm{(BPDN)}$$ also known as \emph{basis pursuit denoise} (BPDN) can be used to
recover an estimate $\hat{x}$.  Cand{\'e}s, Romberg and Tao \cite{CRT05}
and Donoho \cite{Donoho2006_CS} show that 
(BPDN) can stably and robustly recover $x$ from inaccurate and what
appears to be ``incomplete'' measurements $y = Ax + e$ if $A$ is an
appropriate measurement matrix, e.g., a Gaussian random matrix such
that $n \gtrsim k\log(N/k)$. Contrary to $\ell_0$ minimization, (BPDN)
is a convex program and can be solved efficiently. Consequently, it is
possible to recover a stable and robust approximation of $x$ by
solving (BPDN) instead of \eqref{eq:L0_min} at the cost of increasing
the number of measurements taken.

Several works in the literature have proposed alternate algorithms
that attempt to bridge the gap between $\ell_0$ and $\ell_1$
minimization. These include using $\ell_p$ minimization with $0 < p <
1$ which has been shown to be stable and robust under weaker
conditions than those of $\ell_1$ minimization, see
\cite{gribonval07:_highl, chartrand2008rip, Saab_ellp:2010}.
\emph{Weighted $\ell_1$ minimization} is another alternative
if there is prior information regarding the support of the signal
to-be-receovered as it incorporates such information into the recovery
by weighted basis pursuit denoise (w-BPDN) 
$$\minim_{u}\ \|u\|_{1,\mathrm{w}}\ \text{subject to}\ \|Au - y\|_2 \leq \epsilon, \quad \textrm{(w-BPDN)}$$
where $\mathrm{w}\in [0,1]^N$ and $\|u\|_{1,\mathrm{w}} := \sum_i
\mathrm{w}_i |u_i|$ is the weighted $\ell_1$ norm (see
\cite{CS_using_PI_Borries:2007,
  Vaswani_Lu_Modified-CS:2010,FMSY:2011}). Yet another alternative,
the \emph{iterative reweighted $\ell_1$ minimization} (IRL1) algorithm
proposed by Cand\`{e}s, Wakin, and Boyd \cite{candes2008irl1} and
studied by Needell \cite{Needell:2009} solves a sequence of weighted
$\ell_1$ minimization problems with the weights $\mathrm{w}_i^{(t)}
\approx 1/\left|x_i^{(t-1)}\right|$, where $x_i^{(t-1)}$ is the
solution of the $(t-1)$th iteration and $\mathrm{w}_i^{(0)} = 1$ for
all $i \in \{1\dots N\}$.

In this paper, we propose a support driven iterative reweighted
$\ell_1$ (SDRL1) minimization algorithm that uses a small number of
support estimates that are updated in every iteration and applies a
constant weight on each estimate. The algorithm, presented in section \ref{sec:SDRL1}, relies on the accuracy
of each support estimate as opposed to the coefficient magnitude to
improve the signal recovery. While we still lack a proof that SDRL1
outperforms $\ell_1$ minimization, we present two results in section \ref{sec:theoretical} that
motivate SDRL1 and could lead towards such a proof. First, we prove that
if $x$ belongs to a class of signals that satisfy certain decay
conditions and given a support estimate with accuracy larger than
$50\%$, solving a weighted $\ell_1$ minimization problem with
constant weights is guaranteed to produce a support estimate with
higher accuracy. Second, we show that under strict conditions related to
the distribution of coefficients in a support estimate, it is possible
to achieve higher estimate accuracy when the intersection of support
estimates is considered. Finally, we demonstrate through numerical
experiments in section \ref{sec:Numerical} that the performance of our proposed algorithm is similar
to that of IRL1. \vspace{-0.2in}

\section{Iterative reweighted $\ell_1$ minimization}\label{sec:SDRL1}\vspace{-0.1in}
In this section, we give an overview of the IRL1 algorithm, proposed
by Cand\`{e}s, Wakin, and Boyd \cite{candes2008irl1} and present our
proposed support driven reweighted $\ell_1$ (SDRL1) algorithm. \vspace{-0.2in}

\subsection{The IRL1 algorithm}
IRL1 algorithm solves a sequence of (w-BPDN) problems where the
weights are chosen according to $ \mathrm{w}_i =
\frac{1}{\left|\tilde{x}_i\right| + a}.  $ Here $\tilde{x}_i$ is an
estimate of the signal coefficient at index $i$ (from the previous
iteration) and $a$ is a stability parameter. The choice of $a$ affects
the stability of the algorithm and different variations are proposed
for the sparse, compressible, and noisy recovery cases. The algorithm
is summarized Algorithm \ref{alg:IRL1}.
\begin{algorithm}\caption{IRL1 algorithm \cite{candes2008irl1}}
\begin{algorithmic}[1]\label{alg:IRL1}
\STATE \textbf{Input} $y = Ax + e$
\STATE \textbf{Output} $x^{(t)}$
\STATE \textbf{Initialize} $\mathrm{w}_i^{(0)} = 1$ for all $i \in \{1\dots N\}$, $a$\\
\quad\quad\quad\quad $t = 0$, $x^{(0)} = 0$  
\WHILE{$\|x^{(t)} - x^{(t-1)}\|_2 \leq \textrm{Tol}\|x^{(t-1)}\|_2$}
\STATE $t = t + 1$
\STATE $x^{(t)} = \arg\min\limits_{u} \|u\|_{1,W}$ s.t. $\|Au - y\|_2 \leq \epsilon$
\STATE $\mathrm{w}_i = \frac{1}{\left|\tilde{x}_i\right| + a}$
\ENDWHILE
\end{algorithmic}
\end{algorithm}

The rationale behind choosing the weights inversely proportional to
the estimated coefficient magnitude comes from the fact that large
weights encourage small coefficients and small weights encourage large
coefficients. Therefore, if the true signal were known exactly, then
the weights would be set equal to $\mathrm{w}_i =
\frac{1}{|x_i|}$. Otherwise, weighting according to an approximation
of the true signal and iterating was demonstrated to result in better
recovery capabilities than standard $\ell_1$ minimization. In
\cite{Needell:2009}, the error bounds for IRL1 were shown to be
tighter than those of standard $\ell_1$ minimization. However, aside
from empirical studies, no provable results have yet been obtained to
show that IRL1 outperforms standard $\ell_1$. \vspace{-0.3in}

\subsection{Support driven reweighted $\ell_1$ (SDRL1) algorithm}
In \cite{FMSY:2011}, we showed that solving the weighted $\ell_1$
problem with constant weights applied to a support estimate set
$\widetilde{T}$ has better recovery guarantees than standard $\ell_1$
minimization when the $\widetilde{T}$ is at least 50\%
accurate. Moreover, we showed in \cite{Mansour_YilmazSPIE:2011} that
using multiple weighting sets improves on our previous result when
additional information on the support estimate accuracy is available.
Motivated by these works, we propose the SDRL1 algorithm, a support
accuracy driven iterative reweighted $\ell_1$ minimization algorithm,
which identifies two support estimates that are updated in every
iteration and applies constant weights on these estimates. The
SDRL1 algorithm relies on the support estimate accuracy as opposed to
the coefficient magnitude. The algorithm is presented in Algorithm
\ref{alg:SDRL1}.
\begin{algorithm}\caption{Support driven reweighted $\ell_1$ (SDRL1) algorithm.}
\begin{algorithmic}[1]\label{alg:SDRL1}
\STATE \textbf{Input} $y = Ax + e$
\STATE \textbf{Output} $x^{(t)}$
\STATE \textbf{Initialize} $\hat{p} = 0.99$, $\hat{k} = n\log(N/n)/2$, 

\quad\quad\quad\quad $\omega_1 = 0.5$, $\omega_2 = 0$, Tol, $T_1 = \emptyset$, $\Omega = \emptyset$, 

\quad\quad\quad\quad $t = 0$, $s^{(0)} = 0$, $x^{(0)} = 0$  
\WHILE{$\|x^{(t)} - x^{(t-1)}\|_2 \leq \textrm{Tol}\|x^{(t-1)}\|_2$}
\STATE $t = t + 1$
\STATE $\Omega = \textrm{supp}(x^{(t-1)}|_{s^{(t-1)}}) \cap T_1$ 
\STATE Set the weights equal to $${\small
\mathrm{w}_i = \left\{
\begin{array}{l}
	1, \quad i \in T_1^c \cap \Omega^c \\
	\omega_1, \quad i \in T_1\cap \Omega^c\\
	\omega_2, \quad i \in \Omega
\end{array} \right.}
$$
\STATE $x^{(t)} = \arg\min\limits_{u} \|u\|_{1,\mathbf{w}}$ s.t. $\|Au - y\|_2 \leq \epsilon$
\STATE $l = \min\limits_{\Lambda} |\Lambda|$ s.t. $\|x^{(t)}_{\Lambda}\|_2 \geq \hat{p}\|x^{(t)}\|_2$,\\ $s^{(t)} = \min\{l,\hat{k}\}$
\STATE $T_1 = \textrm{supp}(x^{(t)}|_{s^{(t)}})$
\ENDWHILE
\end{algorithmic}
\end{algorithm}
\noindent

Note that we use two empirical parameters to control the size of the
support estimate $T_1$. The first parameter $\hat{k}$ approximates the
minimum sparsity level recoverable by (BPDN). The second parameter $l$
is the number of largest coefficients of $x^{(t)}$ that contribute an
ad hoc percentage $\hat{p}$ of the signal energy. The size of $T_1$ is
set equal to the minimum of $\hat{k}$ and $l$. \vspace{-0.3in}

\section{Motivating theoretical results}\label{sec:theoretical}
The SDRL1 algorithm relies on two main premises. The first is the
ability to improve signal recovery using a sufficiently accurate
support estimate by solving a weighted $\ell_1$ minimization problem
with constants weights. The second is the intersection set of two
support estimates has at least the higher accuracy of either set.

Let $x \in \R^N$ be an arbitrary signal and suppose we collect $n \ll N$ linear measurements $y = Ax$, $A\in \R^{n \times N}$ where $n$ is small enough (or $k$ is large enough) that it is not possible to recover $x$ exactly by solving (BPDN) with $\epsilon = 0$. Denote by $\hat{x}$ the solution to (BPDN), and by $\hat{x}^\omega$ the solution to (w-BPDN) with weight $\omega$ applied to a support estimate set $\widetilde{T}$. Let $x_k$ be the best $k$-term approximation of $x$ and denote by $T_0 = \textrm{supp}(x_k)$ the support set of $x_k$.
\begin{prop}\label{prop:wL1_guarantee}
  Suppose that $\widetilde{T}$ is of size $k$ with accuracy (with
  respect to $T_0$) $\alpha_0 = \frac{s_0}{k}$ for some integer $k/2 <
  s_0 < k$. If $A$ has RIP with constant $\delta_{(a+1)k} <
  \frac{a-\gamma^2}{a+\gamma^2}$ for some $a > 1$ and $\gamma = \omega
  + (1-\omega)\sqrt{2-2\alpha_0}$, and if there exists a positive
  integer $d_1$ such that
	\begin{equation}\label{eq:wL1_decay}
		|x(s_0 + d_1)| \geq (\omega \eta + 1)\|x_{T_0^c}\|_1 + (1-\omega)\eta\|x_{T_0^c \cap \widetilde{T}^c}\|_1,
	\end{equation} 
	where $\eta = \eta_{\omega}(\alpha_0)$ is a well behaved
        constant, then the set $S = \mathrm{supp}(x_{s_0+d_1})$ is
        contained in ${T}_{\omega} =
        \mathrm{supp}(\hat{x}^{\omega}_{k})$.
\end{prop}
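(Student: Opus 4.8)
The plan is to establish the containment $S \subseteq T_\omega$ by comparing, entry by entry, the true signal $x$ with the weighted-$\ell_1$ solution $\hat x^\omega$, using a standard stable-recovery error bound for (w-BPDN) together with the decay hypothesis \eqref{eq:wL1_decay}. First I would invoke the error estimate for weighted $\ell_1$ minimization from \cite{FMSY:2011} (the one that under the stated RIP condition $\delta_{(a+1)k} < \frac{a-\gamma^2}{a+\gamma^2}$ with $\gamma = \omega + (1-\omega)\sqrt{2-2\alpha_0}$ yields a bound of the form $\|\hat x^\omega - x\|_2 \le C_0 \epsilon + C_1 k^{-1/2}\big(\omega\|x_{T_0^c}\|_1 + (1-\omega)\|x_{T_0^c\cap\widetilde T^c}\|_1\big)$). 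Since here $\epsilon = 0$, the noise term drops and we get a clean bound on $\|\hat x^\omega - x\|_\infty \le \|\hat x^\omega - x\|_2$ controlled by the weighted tail of $x$; this is where the constant $\eta_\omega(\alpha_0)$ comes from — it is essentially the constant $C_1$ (possibly rescaled) coming out of that RIP-based argument.

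**Key steps in order.**

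First, fix the index $j = s_0 + d_1$ and note that the decay assumption \eqref{eq:wL1_decay} says precisely that $|x(j)|$ exceeds $\eta$ times the weighted tail quantity that bounds $\|\hat x^\omega - x\|_2$, i.e. $|x(j)| > \|\hat x^\omega - x\|_\infty$. Second, I would argue that every index $i$ with $|x_i| \ge |x(j)|$ — that is, every index in $S = \supp(x_{s_0 + d_1})$ — must appear among the $k$ largest entries of $\hat x^\omega$: indeed, for such $i$ we have $|\hat x^\omega_i| \ge |x_i| - \|\hat x^\omega - x\|_\infty > 0$, and more importantly $|\hat x^\omega_i|$ is bounded below by roughly $|x(j)| - \|\hat x^\omega - x\|_\infty$, while any index $\ell \notin S \cup (\text{small-tail indices})$ has $|x_\ell| < |x(j)|$ and hence $|\hat x^\omega_\ell| < |x(j)| + \|\hat x^\omega - x\|_\infty$. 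The third step is the pigeonhole/counting argument: one shows that the number of indices outside $S$ that can have $\hat x^\omega$-magnitude as large as the smallest magnitude attained on $S$ is at most $k - |S|$, so that $S$ is forced into the top-$k$ set $T_\omega = \supp(\hat x^\omega_k)$. This requires choosing $\eta$ (and implicitly the separation in the decay condition) large enough that the perturbation $\|\hat x^\omega - x\|_\infty$ is smaller than \emph{half} the gap between $|x(j)|$ and the next distinct coefficient value — or, more robustly, one packages this as a single inequality showing the $|S|$-th largest entry of $\hat x^\omega$ strictly dominates the $(k+1)$-th largest.

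**Main obstacle.**

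The delicate part is the sorting/counting argument in the third step: $\hat x^\omega$ need not have the same support ordering as $x$, so controlling "how many spurious large entries" $\hat x^\omega$ can have requires care. The cleanest route is: let $\sigma$ be the $|S|$-th largest magnitude of $\hat x^\omega$ and $\tau$ the $(k+1)$-th largest; on $S$ we have $|\hat x^\omega_i| \ge |x(j)| - \|\hat x^\omega - x\|_2$ for all $i\in S$, so $\sigma \ge |x(j)| - \|\hat x^\omega - x\|_2$; and there exists an index set of size $k+1$ on which $|\hat x^\omega|\ge\tau$, which by the error bound forces at least one of those indices to have $|x|\ge \tau - \|\hat x^\omega - x\|_2$, but $x$ has at most $k$ entries (outside a controlled tail) of magnitude $\ge |x(j)|$... — making this counting tight is exactly what pins down the required form of $\eta_\omega(\alpha_0)$ and the decay rate \eqref{eq:wL1_decay}. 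I would therefore expect the bulk of the proof effort to go into (i) citing/deriving the precise weighted-$\ell_1$ error constant and identifying it with $\eta$, and (ii) making the top-$k$ counting argument rigorous, with the RIP hypothesis entering only through step (i).
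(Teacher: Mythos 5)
Your proposal is correct in its overall strategy---reduce the containment $S\subseteq T_\omega$ to a magnitude-separation inequality between the entries of $\hat{x}^\omega$ on $S$ and its $(k+1)$-th largest entry, driven by an instance-optimality error bound whose constant becomes $\eta$---but the technical route differs from the paper's in two ways. First, the paper never passes to an $\ell_2$ or $\ell_\infty$ error bound: it derives $\ell_1$--$\ell_1$ instance optimality $\|h\|_1\le \eta\bigl(\omega\|x_{T_0^c}\|_1+(1-\omega)\|x_{T_0^c\cap\widetilde{T}^c}\|_1\bigr)$ for $h=\hat{x}^\omega-x$ from the null space property implied by the RIP hypothesis, and bounds $\max_{j\in S}|h(j)|$ by $\|h_{T_0}\|_1$; this is why the explicit $\eta_\omega(\alpha)$ in the Remark is the NSP constant rather than the $C_1k^{-1/2}$ constant you propose to import from the $\ell_2$ bound of \cite{FMSY:2011}. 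Second---and this is the more substantive difference---the counting/pigeonhole step that you correctly identify as the main obstacle is sidestepped entirely in the paper by a single observation: since $|T_0|=k$, at least one of the top $k+1$ entries of $\hat{x}^\omega$ lies in $T_0^c$, hence $\max_{j\in T_\omega^c}|\hat{x}^\omega(j)|\le\|\hat{x}^\omega_{T_0^c}\|_1$; keeping the term $-\|\hat{x}^\omega_{T_0^c}\|_1$ inside the rearranged instance-optimality bound (via $\|h_{T_0^c}\|_1\ge\|\hat{x}^\omega_{T_0^c}\|_1-\|x_{T_0^c}\|_1$, which is also where the ``$+1$'' coefficient on $\|x_{T_0^c}\|_1$ in \eqref{eq:wL1_decay} originates) then yields \eqref{eq:x_omega_condition} directly, with no case analysis over spurious large entries. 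Your pigeonhole sketch can be completed (any competitor index must lie in $T_0^c$, where $|\hat{x}^\omega_\ell|\le\|x_{T_0^c}\|_\infty+\|h\|_\infty$, and $|T_0\setminus S|\le k-|S|$), but it costs roughly a factor of $2$ on $\eta$ relative to the paper's condition and does not reproduce the stated formula for $\eta_\omega(\alpha)$; so if you pursue your route you should either accept the weaker constant or adopt the paper's $\ell_1$-based bookkeeping for the final step.
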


\begin{rem} The constant $\eta_{\omega}(\alpha)$ is given explicitly by
{\small	
		$$
		\eta_{\omega}(\alpha) = \frac{2\left(\sqrt{1 + \delta_{ak}} + \sqrt{a}\sqrt{1 - \delta_{(a+1)k_0}}\right)}{\sqrt{a}\sqrt{1 - \delta_{(a+1)k}} - (\omega + (1-\omega)\sqrt{2 - 2\alpha})\sqrt{1 + \delta_{ak}}}.
		$$
	}
\end{rem}
\begin{proof}[Proof outline]
	The proof of Proposition \ref{prop:wL1_guarantee} is a direct extension of our proof of Proposition 3.2 in \cite{Mansour_YilmazSPIE:2011}. In particular, we want to find the conditions on the signal $x$ and the matrix $A$ which guarantee that the set $S = \mathrm{supp}(x_{s_0+d_1})$ is a subset of $T_{\omega} = \mathrm{supp}(\hat{x}^{\omega}_{k})$. This is achieved when $\hat{x}^{\omega}$ satisfies
\begin{equation}\label{eq:x_omega_condition}
	\min_{j\in S} |\hat{x}^{\omega}(j)| \geq \max_{j\in {T}_{\omega}^c} |\hat{x}^{\omega}(j)| .
\end{equation}
Since $A$ has RIP with $\delta_{(a+1)k} < \frac{a-\gamma^2}{a+\gamma^2}$, it has the Null Space property (NSP) \cite{cohen2006csa} of order $k$, i.e., for any $h \in \mathcal{N}(A)$, $Ah = 0$, then 
$
	\|h\|_1 \leq c_0\|h_{T_0^c}\|_1$, with $c_0 = 1 + \frac{\sqrt{1 + \delta_{ak}}}{\sqrt{a}\sqrt{1 - \delta_{(a+1)k}}}.$\\
Define $h = \hat{x}^{\omega} - x$, then $h \in \mathcal{N}(A)$ and one can show that 
\begin{equation}\label{eq:L1_L1_instopt_wBPDN}
	\|h\|_1 \leq \eta\left(\omega\|x_{T_0^c}\|_1 + (1-\omega)\|x_{T_0^c\cap \widetilde{T}^c}\|_1\right)
\end{equation}
In other words, (w-BPDN) is $\ell_1$-$\ell_1$ instance optimal with these error bounds. The proof of this fact is a direct extension of the $\ell_1$-$\ell_1$ instance optimality of (BPDN) as shown in \cite{cohen2006csa} and we omit the details here. Next, we rewrite \eqref{eq:L1_L1_instopt_wBPDN} as 
$$
	\|h_{T_0}\|_1 \leq (\omega\eta + 1)\|x_{T_0^c}\|_1  + (1-\omega)\eta\|x_{T_0^c\cap \widetilde{T}^c}\|_1 - \|\hat{x}^{\omega}_{T_0^c}\|_1.
$$
To complete the proof, we make the following observations:\\
$(i) \quad \min\limits_{j \in S} |\hat{x}^{\omega}(j)| \geq \min\limits_{j \in S}|x(j)| - \max\limits_{j \in S} |x(j) - \hat{x}^{\omega}(j)|,$\\
$(ii) \quad \|\hat{x}^{\omega}_{T_0^c}\|_1 \geq \max_{j\in {T}_{\omega}^c} |\hat{x}^{\omega}(j)|$\\
which after some manipulations --see \cite{Mansour_YilmazSPIE:2011}, proof of Prop. 3.2 for details of a similar calculation in a different setting-- imply
\begin{equation}\label{eq:RecoveryCondition}
\begin{array}{ll}
	\min\limits_{j \in S} |\hat{x}^{\omega}(j)| \geq & \max_{j\in {T}_{\omega}^c} |\hat{x}^{\omega}(j)|+ \min\limits_{j \in S}|x(j)|  \\ &- (\omega \eta + 1)\|x_{T_0^c}\|_1 + (1-\omega)\eta\|x_{T_0^c \cap \widetilde{T}^c}\|_1.
\end{array}
\end{equation}
Finally, we observe from \eqref{eq:RecoveryCondition} that \eqref{eq:x_omega_condition} holds, i.e., $S \subseteq T_{\omega}$, if 
{\small
$$
	|x(s_0 + d_1)| \geq (\omega \eta + 1)\|x_{T_0^c}\|_1 + (1-\omega)\eta\|x_{T_0^c \cap \widetilde{T}^c}\|_1.
$$}
\end{proof}
Proposition \ref{prop:wL1_guarantee} shows that if the signal $x$ satisfies condition \eqref{eq:wL1_decay} and $\frac{s_0}{k} > 0.5$, then the support of the largest $k$ coefficients of $\hat{x}^{\omega}$ contains at least the support of the largest $s_0 + d_1$ coefficients of $x$ for some positive integer $d_1$.

\begin{figure*}[ht]
	\centering
	\includegraphics[width = 6in, height=1.6in]{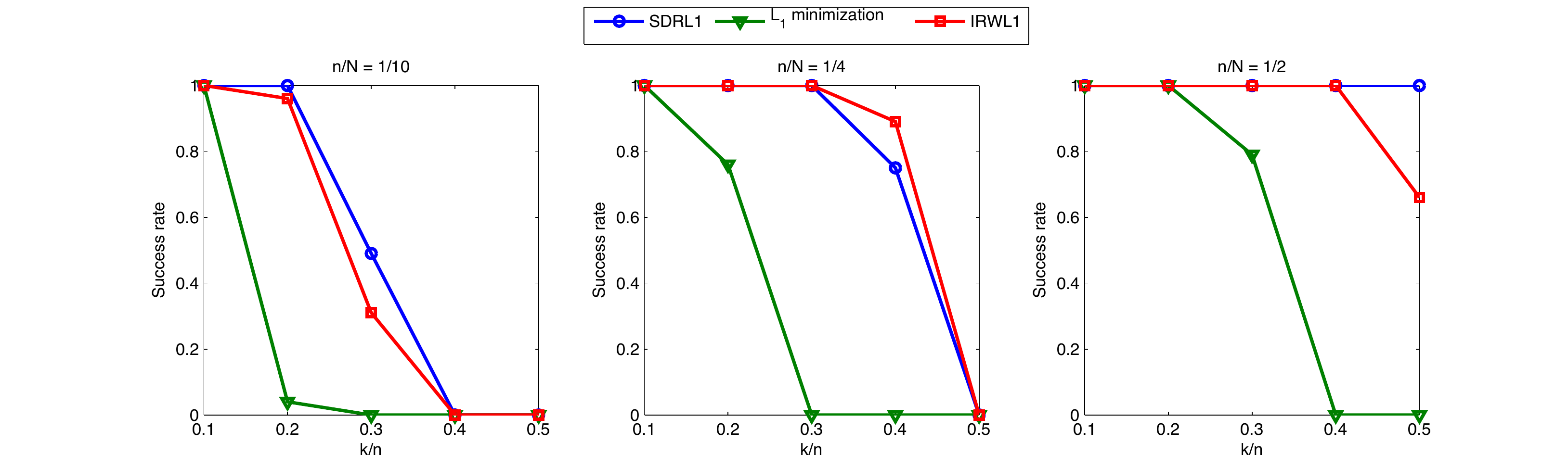}
	\caption{Comparison of the percentage of exact recovery of \emph{sparse signals} between the proposed SDRL1 algorithm, IRL1 \cite{candes2008irl1}, and standard $\ell_1$ minimization. The signals have an ambient dimension $N = 2000$ and the sparsity and number of measurements are varied. The results are averaged over 100 experiments.}\label{fig:sparse_recovery}
\end{figure*}

\begin{figure*}[t]
	\centering
	\includegraphics[width = 6in, height=1.5in]{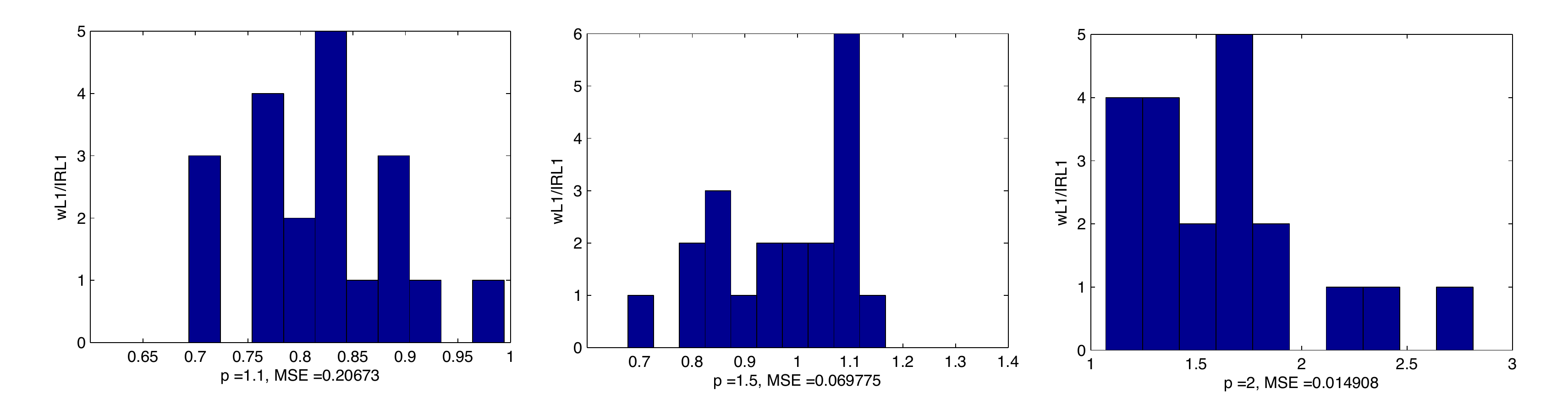}
	\caption{Histogram of the ratio of the mean squared error (MSE) between the proposed SDRL1 and IRL1 \cite{candes2008irl1} for the recovery of \emph{compressible signals}. The signals $x$ follow a power law decay such that $|x_i| = ci^{-p}$, for constant $c$ and exponent $p$.}\label{fig:compressible_histogram}
\end{figure*}

Next we present a proposition where we focus on an idealized scenario:
Suppose that the events $E_i:=\{i\in T\}$, for $i\in \{1,\dots,N\}$ and $T \subseteq \{1,\dots, N\}$, are independent and have equal probability with respect to an appropriate discrete probability measure $\mathrm{P}$. In this case,
we show below, that the accuracy of $\Omega = \widetilde{T}\cap
T_{\omega}$ is at least as high as the higher of the accuracies of
$\widetilde{T}$ and $T_{\omega}$.
For simplicity, we use the notation $\mathrm{P}(T_0 | \widetilde{T})$
to denote $\mathrm{P}(i \in T_0 | i \in \widetilde{T})$.
\begin{prop}\label{prop:support_relations}
	Let $x$ be an arbitrary signal in $\R^N$ and denote by $T_0$ the support of the best $k$-term approximation of $x$. Let the sets $\widetilde{T}$ and $T_{\omega}$ be each of size $k$ and suppose that $\widetilde{T}$ and $T_{\omega}$ contain the support of the largest $s_0$ and $s_1 > s_0$ coefficients of $x$, respectively. Define the set $\Omega = \widetilde{T}\cap T_{\omega}$. Given a discrete probabilty measure $\mathrm{P}$, the events $E_i:=\{i\in T\}$, for $i\in \{1,\dots,N\}$ and $T \subseteq \{1,\dots, N\}$, are independent and equiprobable. Then, for $\rho := \mathrm{P}({T}_{\omega} | \widetilde{T}) \geq \frac{s_0}{k}$, the accuracy of the set $\Omega$ is given by
	
	\centering$\mathrm{P}(T_0|\Omega) = \frac{1}{\rho}\frac{s_0}{k}.$
\end{prop}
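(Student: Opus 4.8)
The plan is to reduce the assertion to a counting identity. Under the hypothesis that the membership events $E_i$ are equiprobable (and independent) with respect to $\mathrm{P}$, for any index sets $U,V\subseteq\{1,\dots,N\}$ one has $\mathrm{P}(U\mid V)=\mathrm{P}(i\in U\cap V)/\mathrm{P}(i\in V)=|U\cap V|/|V|$, so the accuracies in the statement are literally ratios of cardinalities. Write $\Lambda_{s_0}=\mathrm{supp}(x_{s_0})$ and $\Lambda_{s_1}=\mathrm{supp}(x_{s_1})$ for the supports of the $s_0$, resp.\ $s_1$, largest-magnitude entries of $x$; then $\Lambda_{s_0}\subseteq\Lambda_{s_1}\subseteq T_0$, $|\Lambda_{s_0}|=s_0$, and by hypothesis $\Lambda_{s_0}\subseteq\widetilde{T}$ and $\Lambda_{s_1}\subseteq T_{\omega}$, while $|\widetilde{T}|=|T_{\omega}|=k$.

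First I would record the bookkeeping identities. From $|\widetilde{T}|=k$ we get $\rho=\mathrm{P}(T_{\omega}\mid\widetilde{T})=|\widetilde{T}\cap T_{\omega}|/k=|\Omega|/k$, i.e.\ $|\Omega|=\rho k$; moreover $\Lambda_{s_0}\subseteq\widetilde{T}\cap T_{\omega}=\Omega$ forces $|\Omega|\ge s_0$, so $\rho\ge s_0/k$ automatically, consistent with the stated hypothesis. With these in hand the proposition becomes the single cardinality claim $|T_0\cap\Omega|=s_0$, since then $\mathrm{P}(T_0\mid\Omega)=|T_0\cap\Omega|/|\Omega|=s_0/(\rho k)=\tfrac1{\rho}\tfrac{s_0}{k}$.

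The heart of the argument is thus to show $T_0\cap\Omega=\Lambda_{s_0}$. The inclusion $\Lambda_{s_0}\subseteq T_0\cap\Omega$ is immediate from $\Lambda_{s_0}\subseteq T_0$, $\Lambda_{s_0}\subseteq\widetilde{T}$, and $\Lambda_{s_0}\subseteq\Lambda_{s_1}\subseteq T_{\omega}$. For the reverse inclusion I would use the independence/equiprobability hypothesis in the form that the false alarms of $\widetilde{T}$ carry no $\mathrm{P}$-mass inside $T_0$ --- equivalently, conditioned on $i\in\widetilde{T}\setminus\Lambda_{s_0}$ the event $\{i\in T_0\}$ has probability zero --- so that $\widetilde{T}\cap T_0=\Lambda_{s_0}$ up to $\mathrm{P}$-null sets. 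Then $T_0\cap\Omega=(T_0\cap\widetilde{T})\cap T_{\omega}=\Lambda_{s_0}\cap T_{\omega}=\Lambda_{s_0}$, using $\Lambda_{s_0}\subseteq T_{\omega}$ once more, which gives $|T_0\cap\Omega|=s_0$ and the claimed formula. As a byproduct, $\mathrm{P}(T_0\mid\Omega)=s_0/(\rho k)\ge s_0/k=\mathrm{P}(T_0\mid\widetilde{T})$ since $\rho\le1$, so passing to the intersection does not decrease accuracy.

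I expect the only delicate point to be the reverse inclusion above: translating ``the events $E_i$ are independent and equiprobable'' into the precise statement that the entries of $\widetilde{T}$ beyond $\Lambda_{s_0}$ are $\mathrm{P}$-disjoint from $T_0$ (and, more generally, that the overlap $\widetilde{T}\cap T_{\omega}$ outside $\Lambda_{s_1}$ avoids $T_0$). Once that modeling assumption is pinned down, everything else is elementary set manipulation with the chain $\Lambda_{s_0}\subseteq\Lambda_{s_1}\subseteq T_0$ and the cardinality constraints $|\widetilde{T}|=|T_{\omega}|=k$.
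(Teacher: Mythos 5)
Your proposal is correct and follows essentially the same route as the paper: the paper's probabilistic bookkeeping ($\mathrm{P}(T_0\cap\Omega)=s_0/N$, $\mathrm{P}(\Omega)=\rho k/N$, then divide) is exactly your cardinality computation $|T_0\cap\Omega|=s_0$, $|\Omega|=\rho k$ under the uniform counting measure, and the ``delicate point'' you flag ($T_0\cap\widetilde{T}=\Lambda_{s_0}$, hence $T_0\cap\Omega=\Lambda_{s_0}$ via $\Lambda_{s_0}\subseteq\Lambda_{s_1}\subseteq T_{\omega}$) is precisely the step the paper passes over with ``it is easy to see that $\mathrm{P}(T_0\cap\widetilde{T}\mid T_0\cap T_{\omega})=s_0/s_1$.'' Your version is, if anything, more explicit about where the exact-accuracy interpretation of the hypotheses is used.
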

\begin{proof}[Proof outline]
	The proof follows directly using elementary tools in probability theory. In particular, we have $\mathrm{P}(T_0|\widetilde{T}) = \frac{s_0}{k}$, and $\mathrm{P}(T_0|T_{\omega}) = \frac{s_1}{k}$. Define $\rho  = \mathrm{P}({T}_{\omega} | \widetilde{T}) \geq \frac{s_0}{k}$, it is easy to see that $\mathrm{P}(T_0 \cap \widetilde{T} | T_0 \cap {T}_{\omega}) = \frac{s_0}{s_1}$ which leads to\\ 
	$\mathrm{P}(T_0 \cap \Omega) = \mathrm{P}(T_0 \cap {T}_{\omega})\mathrm{P}(T_0 \cap \widetilde{T} | T_0 \cap {T}_{\omega}) = \frac{s_1}{N}\frac{s_0}{s_1} = \frac{s_0}{N}$.\\
	Consequently, $\mathrm{P}(T_0|\Omega) = \frac{\mathrm{P}(T_0 \cap \Omega)}{\mathrm{P}({T}_{\omega} | \widetilde{T})\mathrm{Pr}(\widetilde{T})} = \frac{s_0/N}{\rho (k/N)} = \frac{1}{\rho}\frac{s_0}{k}$.
\end{proof}\vspace{-0.2in}
Proposition \ref{prop:support_relations} indicates that as $\mathrm{Pr}({T}_{\omega} |
\widetilde{T}) \rightarrow \frac{s_0}{k}$, then
$\mathrm{Pr}(T_0|\Omega) \rightarrow 1$. Therefore, when $x$ satisfies
\eqref{eq:wL1_decay} it could be beneficial to solve a weighted
$\ell_1$ problem where we can take advantage of the possible
improvement in accuracy on the set $\widetilde{T} \cap
{T}_{\omega}$. Finally, we note that there are more complex
dependencies between the entries of $\widetilde{T}$ and $T_{\omega}$
of Algorithm \ref{alg:SDRL1} for which Proposition
\ref{prop:support_relations} does not account. \vspace{-0.2in}

\section{Numerical results}\label{sec:Numerical}
We tested our SDRL1 algorithm by comparing its performance with IRL1 and standard $\ell_1$ minimization in recovering synthetic signals $x$ of dimension $N = 2000$. We first recover sparse signals from compressed measurements of
$x$ using matrices $A$ with i.i.d. Gaussian random entries and dimensions $n
\times N$ where $n \in \{N/10, N/4, N/2 \}$. The sparsity of the signal is varied such that $k/n \in \{0.1, 0.2, 0.3, 0.4, 0.5\}$. To quantify the reconstruction performance, we plot in Figure \ref{fig:sparse_recovery} the percentage of successful recovery averaged over 100
realizations of the same experimental conditions. The figure shows that both the proposed algorithm and IRL1 have a comparable performance which is far better than standard $\ell_1$ minimization. 

Next, we generate compressible signals with power law decay such that $x(i) = ci^{-p}$ for some constant $c$ and decay power $p$. We consider the case where $n/N = 0.1$ and the decay power $p \in \{1.1, 1.5, 2\}$ and plot the ratio of the reconstruction error of SDRL1 over that of IRL1. Figure \ref{fig:compressible_histogram} shows the histograms of the ratio for 100 experiments each. Note that a ratio smaller than one means that our algorithm has a smaller reconstruction error than that of IRL1. The histograms indicate that both algorithms have a comparable performance for signals with different decay rates.\vspace{-0.2in}

\small
\bibliographystyle{IEEEbib}
\bibliography{sparse}

\end{document}